\def\be{\begin{equation}}
\def\ee{\end{equation}}
\def\eps{\varepsilon}
\def\CA{{\cal A}}
\def\CB{{\cal B}}
\def\Res{\hbox{\rm Res}}
\def\bP{{\bf P}}
\def\bpi{{\boldsymbol\pi}}
\def\bzeta{{\boldsymbol\zeta}}
\def\bu{{\bf u}}
\def\bone{{\bf 1}}
\def\bI{{\bf I}}
\def\td{{\bar{\delta}}}
\def\bbe{{\bf e}}
\def\bT{{\bf T}}
\def\bD{{\bf D}}
\def\bR{{\bf R}}
\def\bM{{\bf M}}
\def\trace{\hbox{\rm trace}}
\def\parsec{\par\noindent}
\author{Philippe Jacquet\addressmark{1}
\and Wojciech Szpankowski\addressmark{2}\thanks{
W. Szpankowski is also with the Faculty of Electronics, Telecommunications and
Informatics,  Gda\'{n}sk University of Technology, Poland.
His work was supported by
NSF Center for Science of Information (CSoI) Grant CCF-0939370,
and in addition by NSF Grants CCF-1524312, and
NIH Grant 1U01CA198941-01, 
and the NCN grant, grant  UMO-2013/09/B/ST6/02258.
}}
\title[Average Size of a Suffix Tree for Markov Sources]
{Average Size of a Suffix Tree for Markov Sources\\ }
\address{\addressmark{1}Bell Labs, Alcatel-Lucent, France.\\
  \addressmark{2}Department of Computer Science, Purdue University, USA}
\keywords{Suffix tree, Markov sources, digital trees,  size, 
pattern matching, number of occurrences.}
\begin{document}
\maketitle

\begin{abstract}
We study a suffix tree built from 
a sequence generated by a Markovian source. 
Such sources are more realistic probabilistic models for text generation,
data compression, molecular applications, and so forth.
We prove that the average size of such a suffix tree 
is asymptotically equivalent to the average size of a trie built over 
$n$ independent sequences from the same Markovian source. This equivalence is
only known for memoryless sources. We then derive a formula for the size
of a trie under Markovian model to complete the analysis for suffix trees.
We accomplish our goal by applying
some novel techniques of analytic combinatorics on words also known as
analytic pattern matching.
\end{abstract}

\newtheorem{theorem}{Theorem}
\newtheorem{lemma}{Lemma}

\section{Introduction}

Suffix trees are the most popular data structures on words. They find
myriad of applications in
computer science and telecommunications, most notably
in algorithms on strings, data compressions (Lempel-Ziv'77 scheme), and codes.
Despite this, little is still known about their typical behaviors
for general probabilistic models
(see \cite{spa94,chauvin,ward}).

A suffix tree is a {\it trie} (a digital tree; see \cite{js-book}) 
built from the suffixes of a single string.  
In Figure~\ref{suffix-fig1} we show the suffix tree constructed for the
first four suffixes of the string $X=0101101110$.
\begin{figure}
\centerline{
\includegraphics[width=8cm]{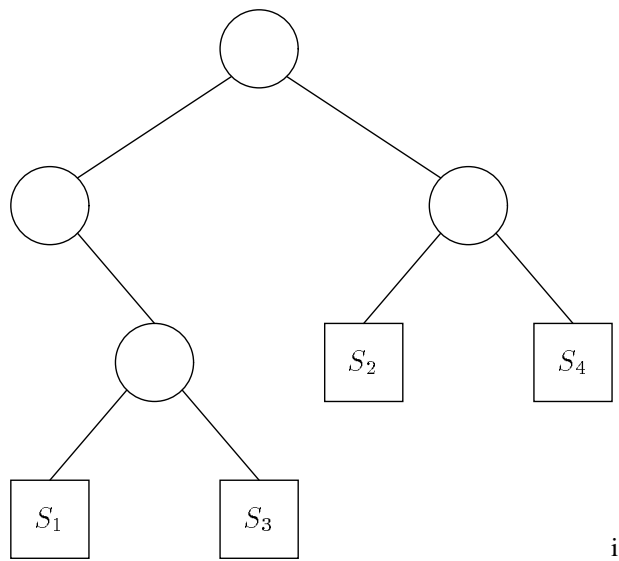}i}
\caption{Suffix tree built from the first five suffixes of
$X=0101101110$, {i.e.} $0101101110$,
$101101110$, $01101110$, $1101110$.}
\label{suffix-fig1}
\end{figure}
More precisely, we actually build a suffix tree on the first $n$ 
{\it infinite} suffixes of a string $X$
as shown in Figure~\ref{suffix-fig1}. We shall call it
simply a suffix tree which we study in this paper. 
Such a tree consists of internal (branching) nodes  and external node
storing the suffixes.
Our goal is to analyze the number of internal nodes called also the {\it size}
of a suffix tree built from a sequence $X$ generated by a Markov source.
We accomplish it by employing powerful techniques of 
analytic combinatorics on words known also as 
{\it analytic pattern matching} \cite{js-book}.

In recent years there has been a resurgence of interest in
algorithmic and combinatorial problems on words
due to a number of novel applications in
computer science, telecommunications, and most notably in molecular biology. 
A few possible applications are listed below. The reader is referred to
our recent book \cite{js-book} for more details.
In computer science and molecular biology many algorithms
depend on a solution to the following problem:
given a word $X$ and a set of arbitrary $b+1$ suffixes
$S_1$, ... , $S_{b+1}$ of $X$, what is the longest common prefix of
these suffixes.
In coding theory (e.g., prefix codes) one asks
for the shortest prefix of
a suffix $S_i$ which is not a prefix of
any other suffixes $S_j$, $1\leq j \leq n$ of a given sequence $X$
(cf. \cite{shields96}).
In data compression schemes,
the following problem is of prime interest:
for a given "data base" sequence of length $n$,
find the longest prefix of the $(n+1)$st suffix $S_{n+1}$ which is not
a prefix of any other suffixes $S_i$ ($1\leq i\leq n$) of the data base
sequence.
And last but not least, in molecular sequences comparison (e.g., finding
homology between DNA sequences),
one may search for the longest run of a given
motif, a unique sequence, the longest alignment, and 
the number of common subwords \cite{js-book}.
These, and several other problems on words,
can be efficiently solved and analyzed by a clever manipulation of
a data structure known as a {\it suffix tree}.
In literature other names have been also coined for this structure,
and among these we mention here position trees, subword trees,
directed acyclic graphs, {\it etc}. 

The extension of suffix tree analysis to Markov sources is
quite significant, especially when the suffix tree is used 
for natural languages.
Indeed, Markov sources of finite
memory approximate very well realistic texts. For example, the following
quote is generated by a memoryless source with the letter statistic of
the {\it Declaration of Independence}:
\begin{quote}
esdehTe,a; psseCed vcenseusirh vra f uetaiapgnuev n cosb mgffgfL
itbahhr nijue n S ueef,ru s,k smodpztrnno.eeteespfg mtet tr i aur oiyr
\end{quote}
which should be compared to the following quote generated by a
Markov source of order 3 trained on the same text:
\begin{quote}
We hat Government of Governments long that their right of abuses are
these rights, it, and or themselves and are disposed according Men, der.
\end{quote}

In this paper we analyze the average number of internal nodes (size) of a suffix
tree built from $n$ (infinite) suffixes of a string generated by a Markov
source with positive transition probabilities. We first
prove in Theorem~\ref{theo-main}
that the average size of a suffix tree under Markovian model is
asymptotically equivalent to the size of a {\it trie} that is built from
$n$ {\it independently} generated strings, each string emitted by 
the corresponding Markovian source.
To accomplish this, we study another quantity, namely the number of 
occurrences of a given pattern $w$ in a  
string of length $n$ generated by
a Markovian source. We use its properties 
to establish our asymptotic equivalence
between suffix trees and tries. Finally, we compare the average size of
suffix trees to trie size under Markovian model
(see Theorem~\ref{th-trie}), which 
-- to the best of our knowledge --
is only partially known \cite{cfv99}.

In fact, there is extensive literature on tries \cite{js-book} and very
scarce one on suffix trees. An analysis of the depth in a Markovian trie has been presented earlier in~\cite{js89}. A rigorous analysis of the depth of suffix tree 
was first presented in \cite{spa94} for memoryless sources, and then
extended in \cite{ward} to Markov sources. We should point out that
depth grows like $O(\log n)$ which makes the analysis manageable.
In fact, height and fillup level for suffix tree -- which are also of
logarithmic growth -- were analyzed in \cite{spa93} 
(see also \cite{chauvin,shields96}). But the average size grows like $O(n)$
and is harder to study. For memoryless sources it was analyzed in
\cite{rj} for tries and in \cite{spa94} for suffix trees. 
We also know that some parameters of suffix trees (e.g., profile) cannot
be inferred from tries, see \cite{gw}. Markov sources
add additional level of complications in the analysis of suffix trees
as well documented in \cite{chauvin}. In fact, the average size of
tries under general dynamic sources was analyzed in \cite{cfv99}, however,
specifications to Markov sources requires extra care, especially for the
so called rational Markov sources. 

The proof of the convergence of the average size of the suffix tree to the average size of the trie borrows many fundamental elements of the depth analysis in~\cite{ward}, for example the term $q_n(w)$ (see next section), but the extension of the depth analysis  to the size analysis require the introduction of a new term $d_n(w)$ which has non trivial properties. The analysis of average size of the trie in a Markovian model has been made by several author before but surprisingly we could not find a clear statement about the periodic case. This is the reason why we have to present a sketched proof here.


\section{Main Results}

We consider a stationary source generating a sequence of symbols drawn from a finite alphabet $\CA$. 

We first derive a formula for the average size of a suffix tree in terms of
the number of pattern occurrences.
Let $w$ be a word over $\CA$. We denote by $O_n(w)$ the number of occurrences of 
word $w$ in a sequence of length $n$ generated by a Markov source with
the transition matrix $\bP$. 
We observe \cite{spa94} that
the average size $s_n$ of a suffix tree built over a sequence of length 
$n$ is 
\be
\label{e1}
s_n=\sum_{w\in\CA^*}P(O_n(w)\ge 2).
\ee
In fact, (\ref{e1}) holds for any probabilistic source. We compare it to
the average size $t_n$ of trie built
over $n$ independent Markov sequences. If $N_n(w)$ is the number of words which begin with $w$ in a trie build with $n$ words, we have
\be
t_n=\sum_{w\in\CA^*}P(N_n(w)\ge 2).
\ee 
Let $P(w)$ be the probability of observing $w$ in a Markov sequence,  $N_n(w)$ is a Bernoulli $(n,P(w))$ and random variable $t_n$ can be written as
\be
t_n=\sum_{w\in\CA^*}1-(1-P(w))^n-nP(w)(1-P(w))^{n-1}
\ee

We specifically consider a Markovian source. 
We assume that the source is stationary and
ergodic. We will consider a Markovian process of order 1 with a 
positive transition matrix $\bP=[P(a|b)]_{a,b\in \CA}$.  
Extensions to higher order Markov  is possible
since a Markovian source of order $r$ is simply a Markovian source of order 1 
over the alphabet $\CA^r$. Notice that contrary to previous analysis we don't assume that $P(a|b)>0$ for all $(a,b)\in\CA^2$, since we allow that some transition may be forbiden and some other mandatory (while keeping the source ergodic). 

Our main result of the paper is formulated next,

\begin{theorem}
Consider a suffix tree built over $n$ suffixes of a sequence
of length $n$ generated by a Markov source
with a positive state transition matrix $\bP$.
There exists $\eps>0$ such that 
\be
\label{e-main}
s_n-t_n=O(n^{1-\eps})
\ee
for large $n$.
\label{theo-main}
\end{theorem}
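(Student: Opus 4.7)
The plan is to bound the pointwise difference $P(O_n(w) \geq 2) - P(N_n(w) \geq 2)$ for each word $w$ and then sum over $w$. Writing $P(X \geq 2) = 1 - P(X = 0) - P(X = 1)$ on both sides, the question splits into two local comparisons: one between $q_n(w) := P(O_n(w) = 0)$ and $(1-P(w))^n$, and one between $d_n(w) := P(O_n(w) = 1)$ and $nP(w)(1-P(w))^{n-1}$. The first comparison is exactly what drove the depth analysis in~\cite{ward}, and I would import it as a black box. The second features the new quantity $d_n(w)$ flagged in the introduction, and its analysis is the heart of the proof.

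For $d_n(w)$ I would set up a matrix generating function in the spirit of Goulden--Jackson: the event that $w$ occurs exactly once in a Markov string decomposes into a $w$-avoiding prefix, the occurrence of $w$ itself, and a $w$-avoiding suffix, with the appropriate Markov transitions linking the end-states across the three blocks. Each avoiding factor is a rational function in $z$ whose dominant pole lies near $z = 1/(1-P(w))$ thanks to the spectral gap of $\bP$ and to the correlation polynomial of $w$ evaluated in the Markov model. Extracting $[z^n]$ from the convolution via Cauchy's formula should give
\[
d_n(w) = nP(w)(1-P(w))^{n-1}\bigl(1 + \Delta_n(w)\bigr),
\]
with $\Delta_n(w)$ a quantifiable correction that depends on $w$ through its autocorrelation structure, and an entirely parallel (and simpler) argument gives $q_n(w) = (1-P(w))^n\bigl(1 + \Delta'_n(w)\bigr)$.

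To sum the termwise errors, I would split the range of $|w|$ at the typical suffix-tree depth $\log n / h$, where $h$ is the entropy rate of the source. When $|w|$ is well below $\log n / h$, $nP(w)$ is large, both probabilities differ from $1$ by exponentially small quantities, and the polynomially many such words contribute $o(1)$. When $|w|$ is well above $\log n / h$, $nP(w)$ is small, each probability is $O(n^2 P(w)^2)$, and the geometric decay of $\sum_{|w| = k} P(w)^2$ (a consequence of the spectral gap of $\bP$) summed from $k \geq (1+\delta)\log n / h$ contributes $O(n^{1-\eps})$. The critical range is $|w| \sim \log n / h$, where there are $n^{1+o(1)}$ words with $P(w) \asymp 1/n$, and the pointwise bounds on $\Delta_n(w)$ and $\Delta'_n(w)$ from the previous paragraph must deliver a power-of-$n$ saving after summation.

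The main obstacle is uniform control of these corrections in the critical range, especially for words $w$ with large autocorrelation (for example patterns of the form $a^k$ when the loop $a \to a$ is allowed). For such patterns the occurrences cluster, $O_n(w)$ is farthest from the independent Binomial $N_n(w)$, and the autocorrelation polynomial of $w$ evaluated in the Markov source can be as large as $\Omega(1)$. Controlling these polynomials uniformly in $w$, and in particular handling the rational/periodic Markovian cases that the introduction warns about, is where the positivity (or spectral gap) of $\bP$ on the allowed transitions is used essentially, providing the uniform exponential separation needed to close the termwise bounds and yield the claimed $O(n^{1-\eps})$ remainder.
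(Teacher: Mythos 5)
Your overall skeleton is the paper's: write $P(\cdot\ge 2)=1-P(\cdot=0)-P(\cdot=1)$, compare each piece termwise with the corresponding Binomial quantity, and sum the two families of errors over $w$ with a split of $|w|$ around $\log n/h$. Two issues, one small and one serious. The small one: you have the two pieces reversed relative to the literature. It is the comparison of $P(O_n(w)=1)$ with $nP(w)(1-P(w))^{n-1}$ (the paper's $q_n(w)$) that is already proved in \cite{ward} and can be imported as a black box; the genuinely new term is $d_n(w)=P(O_n(w)=0)-(1-P(w))^n$, which does not appear in the depth analysis. So the part you propose to import does not exist in the cited reference, and the part you propose to build from scratch is the one that is already known.

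The serious gap is at the crux you yourself flag: the critical range $|w|\sim\log n/h$ together with the highly autocorrelated words. You state that the pointwise corrections ``must deliver a power-of-$n$ saving after summation'' and that positivity or the spectral gap of $\bP$ ``provides the uniform exponential separation needed,'' but neither claim is backed by a mechanism, and the second is not the mechanism that works. For a word $w$ of length $k$ that overlaps itself on more than $k/2$ symbols, occurrences cluster and no uniform pointwise bound with an extra $k\delta^k$ saving holds; the paper only proves the weak bound $O((nP(w))^{\eps})$ for such $w$ and rescues the sum via the counting fact $\sum_{w\in\CA^k-\CB_k}P(w)=O(\delta^k)$, i.e.\ these words carry exponentially small total probability. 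For the remaining words $w\in\CB_k$, the power saving in the critical range comes from the explicit factors $S_w(1)-1=O(\delta^k)$ and $S_w'(1)=O(k\delta^k)$ entering the expansion of the dominant root $A_w$ of $D_w(z)$, propagated through a Mellin transform to give $O(n^{1-\eps}P(w)^{1-\eps}k\delta^k)$, which is summable over $k$ and $w$; your approximation of the dominant pole by $1/(1-P(w))$ erases exactly the discrepancy whose smallness carries the argument. Finally, the specifically Markovian difficulty --- the correction $F_w(z)$ in $D_w(z)$, depending on the first and last letters of $w$ --- needs its own summability statement, namely $\sum_{w\in\CA^{k+1}}P(w)F_w(z)=O(\lambda_1^k)$, proved by a trace and orthogonality argument; your remark about end-states linking the blocks gestures at this, but the convergence of $\sum_w P(w)F_w(1)$ is a separate lemma you would still have to supply.
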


In order to apply Theorem~\ref{theo-main} one needs to estimate the
average size of a trie under Markovian model. This seems to be unknown
except for some general dynamic sources \cite{cfv99}. In fact,
analysis of tries under Markovian sources is quite challenging
(see \cite{jst}). But we can offer the following result for the
average size of a trie under Markovian assumptions.
A sketch of the proof is presented in Section~\ref{sec-trie}.
 
\begin{theorem}
Consider a trie built over $n$ independent sequences generated by
a Markov source with positive transition probabilities.
For $(a,b,c)\in \CA^3$ define
\label{e-alpha}
\be
\alpha_{abc}=\log\left[\frac{P(a|b)P(c|a)}{P(c|b)}\right].
\ee
Then:\\
{\rm (aperiodic case)} If not all $\{\alpha_{abc}\}$ are commensurable, then
$$
t_n=\frac{n}{h} +o(n)
$$
where $h=-\sum_{a,b}\pi_a P(b|a) \log P(b|a)$ 
is the entropy rate of the underlying Markov source with $\pi_a$, $a\in \CA$,
denoting the stationary probability.
\parsec
{\rm (periodic case)} If all $\{\alpha_{abc}\}$ are commensurable, then
$$
t_n=\frac{n}{h}(1+Q(n))+O(n^{1-\eps})
$$
where $Q(n)$ is a periodic function and some $\eps>0$.
\label{th-trie}
\end{theorem}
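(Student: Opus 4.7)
The plan is to analyze the Poissonized trie size by Mellin transform and then depoissonize. Define
\[\tilde T(z)=\sum_{w\in\CA^*}\phi(zP(w)),\qquad \phi(u)=1-(1+u)e^{-u},\]
so that the Jacquet--Szpankowski analytic depoissonization lemma gives $t_n=\tilde T(n)+O(1)$ once the standard cone-of-analyticity bounds on $\tilde T(z)$ are checked. The Mellin transform factors as $\tilde T^*(s)=\phi^*(s)\,S(s)$, where $\phi^*(s)=-\Gamma(s)(s+1)$ is analytic at $s=-1$ with $\phi^*(-1)=1$, and
\[S(s)=\sum_{w\in\CA^*}P(w)^{-s}=\bpi_s(\bI-\bP_s)^{-1}\bone\]
on the half-plane $\Re(s)<-1$, with $\bP_s$ the matrix of entries $P(a\mid b)^{-s}$ and $\bpi_s[a]=\pi_a^{-s}$. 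The meromorphic extension of $S(s)$ has poles exactly at the zeros of $1-\lambda(s)$, where $\lambda(s)$ is the Perron--Frobenius eigenvalue of $\bP_s$.

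Because $\bP_{-1}$ is the stochastic transition matrix, $\lambda(-1)=1$ is simple and first-order perturbation of the eigenvalue equation gives $\lambda'(-1)=h$. Hence $S(s)$ has a simple pole at $s=-1$ with residue $-1/h$, producing a pole of $\tilde T^*(s)$ of the same residue. Shifting the Mellin contour to the right of $s=-1$ (using the exponential decay of $\phi^*$ on vertical lines from Stirling) picks up this pole and contributes the main term $z/h$ to $\tilde T(z)$, hence $n/h$ to $t_n$.

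The delicate step is to locate the other zeros of $1-\lambda(s)$ on $\Re(s)=-1$. By Perron--Frobenius applied to $\bP_{-1+i\tau}$, whose entries are $P(a\mid b)e^{-i\tau\log P(a\mid b)}$, one has $\lambda(-1+i\tau)=1$ with $\tau\ne 0$ iff there exist $\theta\in\bR$ and $\rho:\CA\to\bR$ satisfying
\[-\tau\log P(a\mid b)\equiv\rho(b)-\rho(a)+\theta\pmod{2\pi}\]
for every transition $(a,b)$. Writing this identity for the direct edge $b\to c$ and for the composite path $b\to a\to c$ and subtracting cancels the $\rho$ terms, leaving $\tau\,\alpha_{abc}\equiv-\theta\pmod{2\pi}$ for every triple. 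Such a $\tau\ne 0$ exists exactly when all $\alpha_{abc}$ lie in a common coset of some lattice, which is the commensurability hypothesis. In the aperiodic case $s=-1$ is then the only singularity of $\tilde T^*$ on its vertical line, and a small rightward contour shift together with the standard Mellin asymptotics yields $\tilde T(z)=z/h+o(z)$. In the periodic case the zeros form an arithmetic progression $\{-1+2\pi ik/T\}_{k\in\mathbb{Z}}$; summing the corresponding residues gives a Fourier series of the form $(z/h)(1+Q(\log z))$ with $Q$ periodic of period $T$, and a further contour shift past $\Re(s)=-1-\eps$ absorbs the remainder in $O(z^{1-\eps})$. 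Depoissonization transfers these estimates to $t_n$.

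The principal technical obstacle is uniform control of the resolvent $(\bI-\bP_s)^{-1}$ along vertical lines in a neighbourhood of $\Re(s)=-1$: one must show that the spectral gap between $\lambda(s)$ and the rest of the spectrum of $\bP_s$ persists uniformly in a strip $\Re(s)\in[-1-\eps,-1+\eps]$, and obtain enough polynomial bounds on the resolvent as $|\Im(s)|\to\infty$ so that the horizontal contour segments can be discarded. Under the positivity hypothesis on $\bP$ these estimates are classical, but the authors note (following \cite{cfv99}) that the ``rational'' Markov sources with forbidden transitions introduce additional algebraic singularities that would fall outside the scope of this theorem.
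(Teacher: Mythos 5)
Your proposal follows essentially the same route as the paper's own sketch: Poissonize the size, take the Mellin transform to reduce everything to the resolvent $(\bI-\bP(s))^{-1}$ of the matrix with entries $P(a|b)^{-s}$, extract the main term $n/h$ from the simple pole at $s=-1$ via $\lambda'(-1)=h$, and distinguish the periodic and aperiodic cases by whether the Perron eigenvalue returns to $1$ elsewhere on the line $\Re(s)=-1$, finishing with depoissonization. The only differences are cosmetic (you set up the harmonic sum $\sum_w\phi(zP(w))$ directly rather than through the paper's first-symbol recurrences, which give the same Mellin transform) plus a welcome addition: your explicit elimination of the unimodular conjugation $\rho$ to show that the unit-circle eigenvalue condition reduces exactly to commensurability of the $\alpha_{abc}$, a step the paper delegates to a citation.
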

\paragraph{Remark} 
We recall that a set of real numbers are commensurable 
(also known as "rationally related")
when their ratios are rational numbers.
We observe that if for all $(a,b)\in\CA^2$, the $\alpha_{abc}$ are commensurable for
one $c\in\CA$, then $\alpha_{abc}$ are commensurable for all values of $c$.Furthermore in the aperiodic case the $o(n)$ term can have a growth rate arbitrary close to order $n$, depending on source settings as shown in~\cite{frv} in the memoryless case.

In the rest of this section, we present a road map of the proof of (\ref{e-main}).
For this we will make use of ordinary generating functions. 
Let $w\in\CA^k$ be a word of length $k$. 
We also define $N_0(z,w)=\sum_{n>0} P(O_n(w)=0)z^n$ and 
$N_1(z,w)=\sum_{n>0}P(O_n(w)=1)z^n$  for $z\in\mathbb{C}$. 
We know from~\cite{js-book} that 
\begin{eqnarray*}
N_0(z,w)&=&\frac{S_w(z)}{D_w(z)}\\
N_1(z,w)&=&\frac{z^kP(w)}{D_w^2(z)}
\end{eqnarray*}
where $S_w(z)$ is the autocorrelation polynomial of word $w$ and 
$D_w(z)$ is defined as follows

\be
D_w(z)=S_w(z)(1-z)+z^kP(w)\left(1+F_{w}(z)(1-z)\right),
\ee
The memoryless case considers $F_w(z)=0$. The addition of a non zero  $F_w(z)$ is a significant change from the analysis in the memoryless case. In fact it captures the correlations between characters in the sequence and leads to non trivial developments. Here $F_{w}(z)$ for $w\in\CA^*-\{\eps\}$ is a function that depends on the Markov parameters of the source. It also depends only on the first and last character of $w$, say respectively $a$ and $b$ for $(a,b)\in\CA^2$ as described below.

Let $\bP$ be the transition matrix of the Markov source and  $\bpi$ be its 
stationary vector with $\pi_a$ its coefficient at symbol $a\in \CA$. 
The vector $\bone$ is the vector with all coefficients equal to 1 and 
$\bI$ is the identity matrix.
Assuming that $a\in\CA$ (resp. $b$) is the first (resp. last) symbol of $w$, 
we have \cite{rs98,js-book} 
\be
F_{w}(z)=\frac{1}{\pi_a}\left[(\bP-\bpi\otimes\bone)
\left(\bI-z(\bP+\bpi\otimes\bone)\right)^{-1}\right]_{b,a}
\ee
where $[{\bf A}]_{a,b}$ indicates the $(a,b)$ coefficient of the matrix
${\bf A}$, and $\otimes$ represents the tensor product.
An alternative way to express $F_w(z)$ is 
\be
\label{e-F}
F_{w}(z)=\frac{1}{\pi_a}\langle\bbe_a (\bP-\bpi\otimes\bone)
\left(\bI-z(\bP+\bpi\otimes\bone)\right)^{-1}\bbe_b\rangle
\ee
where $\bbe_c$ for $c\in\CA$ is the vector with a 1 at the 
position corresponding to symbol $c$ and all other coefficients are 0. 
Here $\langle {\bf x}, {\bf y}\rangle$ represents the scalar product of
${\bf x}$ and ${\bf y}$.

Let us define two important quantities:
\begin{eqnarray*}
d_n(w)&=&P(O_n(w)=0)-(1-P(w))^n, \\
q_n(w)&=&P(O_n(w)=1)-nP(w)(1-P(w))^{n-1}, 
\end{eqnarray*}
and their corresponding generating functions
\begin{eqnarray*}
\Delta_w(z)&=&\sum_{n>0}d_n(w)z^n\\
Q_w(z)&=&\sum_{n>0}q_n(w)z^n.
\end{eqnarray*}
Observe that $t_n-s_n=\sum_{w\in\CA^*} d_n(w)+q_n(w)$. 
Thus we need to estimate $d_n(w)$ and $q_n(w)$ for all $w\in \CA^*$.

We denote $\CB_k$  the set of words of length $k$ that do not overlap with 
themselves over more than $k/2$ symbols (see~\cite{js-book,spa94,ward} 
for more precise definition).  To be precise $w\in\CA^k-\CB_k$ if there exist $j>k/2$ and $v\in\CA^j$ and $(u_1,u_2)\in\CA^{k-j}$ such that $w=u_1v=vu_2$. This set plays a fundamental role in the analysis and it is already proven in~\cite{ward} that 
$$\sum_{w\in\CA^k-\CB_k}P(w)=O(\delta_1^k)$$ 
where $\delta_1$ is the largest coefficient in the Markovian transition 
matrix $\bP$. Since the authors of~\cite{ward} only consider strictly positive matrix $\bP$ we have $\delta_1<1$. Anyhow in the present paper we allow some coefficients to be equal to 1 or 0, as long the source is ergodic. Therefore $\delta_1$ may be equal to 1. To cope with this minor problem 
we define 
\begin{eqnarray*}
p&=&\exp\left(\lim\sup_{k,w\in\CA^k}\frac{\log P(w)}{k}\right)\\
q&=&\exp\left(\lim\inf_{k,w\in\CA^k,P(w)\neq 0}\frac{\log P(w)}{k}\right). 
\end{eqnarray*}
These quantities exist and are smaller than 1 since $\CA$ is a finite alphabet. 
From now we set $\delta=\sqrt{p}$ which replaces the parameter $\delta_1$ in the previous statements.

Now we are in the position to present two crucial lemmas, proved in the
next section, from which Theorem~\ref{theo-main} follows.

\begin{lemma}
There exist $\eps<1$ such that $\sum_{w\in\CA^*}q_n(w)=O(n^{\eps})$.
\label{lem1}
\end{lemma}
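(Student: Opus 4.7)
The natural starting point is the explicit generating function
$$
Q_w(z) = \frac{z^kP(w)}{D_w^2(z)} - \frac{zP(w)}{(1-(1-P(w))z)^2},
$$
whose $n$th coefficient is exactly $q_n(w)$: the first term is $N_1(z,w)$ as given in the paper, and the second is the elementary generating function of $nP(w)(1-P(w))^{n-1}$. Asymptotically $q_n(w)$ is controlled by two nearby double poles, namely the root $\rho_w$ of $D_w(z)$ closest to $1$ and the comparison pole $1/(1-P(w))$; the heart of the lemma is to quantify how nearly these contributions cancel and then to sum the residual over $w$.

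I would split $\CA^*$ into three pieces. For $|w|=k>A\log n$ with $A$ large, the second-order expansion of both $P(O_n(w)=1)$ and $nP(w)(1-P(w))^{n-1}$ around their common leading value $nP(w)$ gives the uniform bound $|q_n(w)|=O(nP(w)^2)$, up to bounded Markov overlap factors. Using $\sum_{|w|=k}P(w)^2\le p^k$, the tail over $k>A\log n$ is $O(np^{A\log n})$, which is $O(n^{1-\eps})$ for $A$ chosen sufficiently large. For $w\in\CA^k\setminus\CB_k$ with $k\le A\log n$, the same second-order estimate paired with the paper's bound $\sum_{w\notin\CB_k}P(w)=O(\delta^k)$ keeps this piece geometrically small in $k$. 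The substantive case is $w\in\CB_k$ with $k\le A\log n$.

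For this main case I would apply Cauchy's formula and push the contour past $\rho_w$ and $1/(1-P(w))$. The defining equation $D_w(\rho_w)=0$ rewrites as
$$
(1-\rho_w)\bigl[1+\rho_w^kP(w)F_w(\rho_w)\bigr]+\rho_w^kP(w)=0,
$$
so the implicit function theorem yields $\rho_w-1=P(w)+O\bigl(kP(w)^2(1+|F_w(1)|)\bigr)$, whereas $1/(1-P(w))-1=P(w)+P(w)^2+\cdots$; the two poles and their residues agree to first order and differ only by $O(kP(w)^2)$. The uniformity in $w$ is delivered by formula~(\ref{e-F}): since the spectral radius of $\bP-\bpi\otimes\bone$ is strictly less than $1$, the resolvent $(\bI-z(\bP+\bpi\otimes\bone))^{-1}$ is analytic and bounded on a fixed disc $|z|\le 1+\eta$ independent of $w$, so $|F_w(z)|\le C$ uniformly there. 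Expanding the two sets of residues then produces a per-word estimate of the shape
$$
q_n(w)=O\!\left(nk\,P(w)^2(1-P(w))^{n-1}\right),
$$
plus exponentially smaller contributions from non-dominant zeros of $D_w$, which lie beyond the shifted contour of radius $1+\eta$.

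Summing this estimate over $w\in\CB_k$ and $k\le A\log n$ is finally tamed by the usual entropy change of variable: $\sum_{|w|=k}P(w)^2\asymp e^{-kH_2}$, where $H_2$ denotes the R\'enyi entropy of order two of the source, and the resulting double sum in $k$ is dominated by the regime $k\sim(\log n)/h$, producing a bound of order $n^{1-H_2/h}\log n$, which is $O(n^\eps)$ for some $\eps<1$ since $H_2>0$. The principal obstacle I anticipate is precisely the uniform pole analysis: establishing analytic continuation of $D_w$ past $|z|=1$ uniformly in $w\in\CB_k$, controlling the remaining zeros of $D_w$ so that the contour can in fact be shifted to radius $1+\eta$, and carrying out the residue comparison with error bounds uniform in $w$. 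The subtleties surrounding rational/periodic Markov sources highlighted in Theorem~\ref{th-trie} may also surface here, requiring a distinction between the aperiodic and periodic cases analogous to what must be done in the trie analysis.
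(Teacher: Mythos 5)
Your plan is, in substance, a reconstruction of the argument that the paper itself does not spell out: the paper's entire proof of Lemma~\ref{lem1} consists of writing down the same generating function $Q_w(z)$ you start from and then citing \cite{ward}, where the bound $\frac{1}{n}\sum_{w}q_n(w)=O(n^{-\eps})$ is established by essentially the route you describe --- Cauchy's formula, comparison of the dominant root of $D_w$ with the pole at $1/(1-P(w))$, a split according to $w\in\CB_k$ versus $w\in\CA^k\setminus\CB_k$ and according to $k$ relative to $\log n$, and a final summation controlled by a second-order (R\'enyi) entropy. So you are doing more work than the paper does, along the same lines as the source it cites; the approach is sound.

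A few points to tighten. The key analytic input you defer as the ``principal obstacle'' --- that $D_w(z)$ has a single zero in a disk of radius $\rho>1$, uniformly in $w$ --- need not be re-derived: it is exactly the statement from \cite{spa94,ward} that the paper invokes later in the proof of Lemma~\ref{lem2}, so it is legitimately citable and your plan does not actually have a hole there. Your tail bound for $k>A\log n$ should read $O(n^2P(w)^2)$ rather than $O(nP(w)^2)$, since the second-order corrections to both $P(O_n(w)=1)$ and $nP(w)(1-P(w))^{n-1}$ are quadratic in $nP(w)$; this costs one extra factor of $n$ and is absorbed by taking $A$ larger. Your justification of the uniform bound on $F_w$ conflates $\bP-\bpi\otimes\bone$ with the matrix inside the resolvent; the clean statement is the paper's Lemma~\ref{lemF}, namely $F_w(z)$ is a fixed entry of $\bR(\bI-z\bR)^{-1}$ with $\bR=\bP-\bpi\otimes\bone$ of spectral radius $|\lambda_1|<1$, whence analyticity and uniform boundedness for $|z|<1/|\lambda_1|$. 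Finally, the periodic/aperiodic dichotomy you raise at the end is irrelevant for this lemma: only an $O(n^{\eps})$ upper bound is required, not an asymptotic equivalent, so no poles on a vertical line need to be located precisely.
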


\begin{lemma}
There exists a sequence $R_n(w)$, for $w\in\CA^*$ such for 
all $1>\eps>0$ we have 
\begin{itemize}
\item{(i)} for $w\in\CB_k$: $d_n(w)=O((nP(w))^\eps k\delta^k)+R_n(w)$;
\item{(ii)} for $w\in\CA^k-\CB_k$: $d_n(w)=O((nP(w))^\eps)+R_n(w)$,
\end{itemize}
where $R_n(w)$ is such that $\sum_{w\in\CA^*}R_n(w)=O(1)$.
\label{lem2}
\end{lemma}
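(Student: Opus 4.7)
The plan is to extract $d_n(w)=[z^n]\Delta_w(z)$ by Cauchy's formula applied to
\[
\Delta_w(z)=\frac{S_w(z)}{D_w(z)}-\frac{z(1-P(w))}{1-z(1-P(w))},
\]
splitting the contour into a small-radius circle that captures the two close-by dominant poles of the integrand, and a large-radius circle that defines the remainder $R_n(w)$. Thanks to the analyticity of $F_w(z)$ on a fixed disc $|z|<R$ (uniform in $w$, via the spectral gap of $\bP+\bpi\otimes\bone$ controlling the resolvent in~(\ref{e-F})) together with a Rouch\'e argument as in~\cite{ward}, $D_w$ has exactly one root $A_w$ near $z_w:=1/(1-P(w))$, while all other zeros lie outside some disc $|z|\le\rho_0$ with $\rho_0>1$ independent of $w$. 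I would set
\[
R_n(w):=\frac{1}{2\pi i}\oint_{|z|=\rho_0}\Delta_w(z)\,\frac{dz}{z^{n+1}},
\]
so that $|R_n(w)|=O(\rho_0^{-n})$ uniformly in $w$, giving $\sum_{w\in\CA^*}R_n(w)=O(1)$ once one restricts to words of length $k\le c\log n$ with $c$ small enough that $|\CA|^c<\rho_0$; longer words contribute negligibly because $P(w)$ is exponentially small there.

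The principal contribution to $d_n(w)$ then comes from the residues at $A_w$ and $z_w$. Substituting $1-z_w=-P(w)/(1-P(w))$ into $D_w(z_w)=S_w(z_w)(1-z_w)+z_w^kP(w)(1+F_w(z_w)(1-z_w))$ and expanding $z_w^{k-1}-1=(k-1)P(w)+O(k^2P(w)^2)$, the leading $O(P(w))$ contributions combine to give
\[
D_w(z_w)=P(w)\bigl[z_w(z_w^{k-1}-1)-(S_w(z_w)-1)z_w-z_w^{k+1}F_w(z_w)P(w)\bigr]+O(P(w)^2),
\]
which is $O(P(w))$ in general and $O(P(w)\,k\delta^k)$ when $w\in\CB_k$: the absence of periods of length $\le k/2$ forces $|S_w(z_w)-1|=O(\delta^k)$ by direct tail-bounding of the autocorrelation series, while $(k-1)P(w)\le k\delta^{2k}$ is itself dominated by $k\delta^k$. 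A standard implicit-function argument gives $|A_w-z_w|$ of the same order. Matching the residue of $N_0(z,w)$ at $A_w$ against the residue of $T_w(z):=z(1-P(w))/(1-z(1-P(w)))$ at $z_w$, the two principal contributions to $[z^n]\Delta_w$ differ by $O(n|A_w-z_w|\,e^{-nP(w)})$, and the elementary bound $(nP(w))e^{-nP(w)}\le C_\eps(nP(w))^\eps$ (valid for any $\eps<1$) converts this into $(nP(w))^\eps$ in case~(ii) and $(nP(w))^\eps k\delta^k$ in case~(i).

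The main obstacle is the uniform-in-$w$ control of $S_w$, $F_w$, $A_w$, and the residue gap, which is largely handled by the methods of~\cite{ward}. The novel element here is the first-order matching between the contribution of the root $A_w$ of $D_w$ and the pole $z_w$ of $T_w$: in the depth analysis of~\cite{ward} only the leading exponential $A_w^{-n}$ had to be tracked, whereas $d_n(w)$ requires the first-order cancellation between $-S_w(A_w)/(D_w'(A_w)A_w^{n+1})$ and $(1-P(w))^n$, which is precisely why the new quantity $d_n(w)$ is introduced in this paper.
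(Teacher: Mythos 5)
Your contour decomposition (residues at $A_w$ and at $z_w=1/(1-P(w))$, plus an outer circle) is the same as the paper's, and your elementary bound $(nP(w))e^{-nP(w)}\le C_\eps (nP(w))^\eps$ does correctly handle the part of the error coming from the displacement $|A_w-z_w|=O(P(w)k\delta^k)$ of the dominant pole. The genuine gap is that the two principal contributions are $c_w A_w^{-n}$ and $1\cdot z_w^{-n}$ with $c_w=-A_w^{-k}S_w(A_w)/D_w'(A_w)$, and your estimate $O(n|A_w-z_w|e^{-nP(w)})$ accounts only for $A_w\ne z_w$, not for $c_w\ne 1$. Expanding as the paper does, $c_w-1=-\frac{P(w)}{S_w(1)}\bigl(F_w(1)+S_w'(1)/S_w(1)\bigr)+O(P(w)^2)$; for $w\in\CB_k$ the term $S_w'(1)/S_w(1)$ is $O(k\delta^k)$, but $F_w(1)$ depends only on the first and last letters of $w$ and is $\Theta(1)$ in general, so $c_w-1=O(P(w))$ with no $k\delta^k$ gain. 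The resulting contribution $(c_w-1)A_w^{-n}=O(P(w)e^{-nP(w)})$ does not fit into $O((nP(w))^\eps k\delta^k)$ once $\eps$ is close to $1$, and, worse, it is not absolutely summable over $w\in\CA^*$ level by level, since $\sum_{w\in\CA^k}P(w)=1$ for every $k$. The paper's way out is to place a term $d_w(0)e^{-n}$ (with $d_w(0)$ essentially $c_w-1$) inside $R_n(w)$ and to prove that $\sum_w d_w(0)$ converges using the cancellation $\sum_{w\in\CA^{k+1}}P(w)F_w(z)=O(\lambda_1^k)$ of Lemma~\ref{lemFF}, which exploits that $F_w$ depends only on the endpoints of $w$ and that $\bM(z)\bD=0$. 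This cancellation is the genuinely Markovian ingredient of the proof and your proposal has no substitute for it; in the memoryless case ($F_w\equiv 0$) your argument would essentially go through, which signals that the new difficulty has not been engaged.

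The same issue undermines your remainder. On $|z|=\rho_0$ the integrand is $O(P(w)\rho_0^{-n})$ (the factor $P(w)$ is essential, and your uniform $O(\rho_0^{-n})$ discards it), yet even with it $\sum_{w\in\CA^k}P(w)\rho_0^{-n}=\rho_0^{-n}$ per level, which diverges when summed over the infinitely many lengths $k$; restricting to $k\le c\log n$ and declaring longer words negligible does not yield $\sum_{w\in\CA^*}R_n(w)=O(1)$. The paper obtains per-level geometric decay by splitting $\Delta_w$ on the outer circle into a piece that is $O(P(w)\delta^k)$ and the piece $P(w)F_w(z)/D_w(z)=P(w)F_w(z)+O(P(w)\delta^k)$, and then invokes Lemma~\ref{lemFF} once more to sum the $P(w)F_w(z)$ part. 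A minor difference of route: the paper evaluates the main term via a Mellin transform of $\delta_w(x)-\delta_w(0)e^{-x}$ rather than by direct estimates; your elementary treatment of the pole-displacement part is viable (and arguably cleaner), but only after the $c_w-1$ contribution has been split off and handled as above.
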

\paragraph*{Remark:} The sequence $d_n(w)$ is the main new element which makes the difference between the suffix tree depth analysis done in~\cite{ward} and the suffix tree size analysis. The later  was done in~\cite{js-book} for the memoryless case. The sequence $R_n(w)$ reflects the impact of the Markovian source on the analysis in particular is a consequence of the introduction of a non zero function $F_w(z)$.

\begin{proof}[of Theorem~\ref{theo-main}]
We already know via Lemma~\ref{lem1} that there exists 
$\eps<1$ such that $\sum_{w\in\CA^*}q_n(w)=O(n^\eps)$.
Let now $d^{(1)}_n=\sum_k\sum_{w\in\CB_k}(d_n(w)-R_n(w))$ and since for all 
$\eps>0$ observe that
$$
d_n^{(1)}=\sum_k\sum_{w\in\CB_k}O(n^\eps P^\eps(w)k\delta^k)=\sum_k O(n^\eps k(p^\eps\delta)^k),
$$ 
hence it converges for all $\eps>0$. 
Also let $d_n^{(2)}=\sum_k\sum_{w\in\CA^k-\CB_k}(d_n(w)-R_n(w))$.
Observe that
\begin{eqnarray*}
d^{(2)}_n&=&\sum_k\sum_{w\in\CA^k-\CB_k}O(n^\eps P^{\eps-1}(w)P(w))\\
&=&\sum_k\sum_{w\in\CA^k-\CB_k}O(n^\eps q^{(\eps-1)k}P(w))\\
&=&\sum_k O(n^\eps(\delta q^{\eps-1})^k),
\end{eqnarray*}
which converges for all $\eps$ such that $\delta q^{\eps-1}<1$ 
(take $\eps<1$ close enough to 1) and is $O(n^\eps)$. 
Finally $d_n^{(1)}+d_n^{(2)}+\sum_{w\in\CA^*}R_n(w)$ is also $O(n^\eps)$ 
for $\eps>0$ since $\sum_{w\in\CA^*}R_n(w)$ is finitely bounded. 
This completes the proof of Theorem~\ref{theo-main}.
\end{proof}

\section{Proof of Lemmas}

In this section we prove Lemma~\ref{lem1} and Lemma~\ref{lem2}.
In the proof of Lemma~\ref{lem1} we shall use some facts from \cite{ward},
however, our proof follows the pattern matching approach developed
in \cite{js-book}. 

\subsection{Proof of Lemma~\ref{lem1}}

The result is in fact already proven in ~\cite{ward}. 
Define
\be
Q_w(z)=P(w)\left(\frac{z^k}{D^2_w(z)}-\frac{z}{(1-(1-P(w))z)^2}\right).
\ee
In~\cite{ward} one defines $Q_n(1)=\frac{1}{n}\sum_{w\in\CA^*}q_n(w)$ and it is proven there that $Q_n(1)=O(n^{-\eps})$ for some $\eps>0$. 

\subsection{Proof of Lemma~\ref{lem2}}
First we have the following simple lemma.
The largest eigenvalue of $\bP$ is 1, let $\lambda_1,\lambda_2,\ldots$ 
be a sequence of other eigenvalues in the decreasing order of their modulus. 

\begin{lemma}
Uniformly for all $w\in\CA^*$ we find $F_w(z)=O(\frac{1}{1-|\lambda_1 z|})$.
\label{lemF}
\end{lemma}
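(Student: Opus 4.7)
The plan is to reduce the resolvent $(\bI-z(\bP+\bpi\otimes\bone))^{-1}$ appearing in~(\ref{e-F}) to one involving the fluctuation matrix $\bP-\bpi\otimes\bone$, whose spectral radius equals $|\lambda_1|$, and then read off the desired bound from a Neumann-series estimate. Writing $\bM:=\bpi\otimes\bone$, the starting point consists of the elementary identities $\bM^2=\bM$ and $\bP\bM=\bM\bP=\bM$, which follow at once from the stationarity of $\bpi$ and the fact that $\bone$ is a right eigenvector of $\bP$.

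From these commutation relations I would prove by induction that $(\bP-\bM)^n=\bP^n-\bM$ for every $n\ge 1$, and then the telescoping matrix identity
\[(\bP-\bM)(\bP+\bM)^n=(\bP-\bM)^{n+1},\qquad n\ge 0.\]
Multiplying by $z^n$ and summing the geometric series (valid for $|z|$ small, then extended by analytic continuation) yields
\[(\bP-\bM)\bigl(\bI-z(\bP+\bM)\bigr)^{-1}=(\bP-\bM)\bigl(\bI-z(\bP-\bM)\bigr)^{-1},\]
and substituting into~(\ref{e-F}) rewrites
\[F_w(z)=\frac{1}{\pi_a}\bigl\langle\bbe_a,\,(\bP-\bM)\bigl(\bI-z(\bP-\bM)\bigr)^{-1}\bbe_b\bigr\rangle,\]
i.e.\ with the ``good'' matrix $\bP-\bM$ inside the inverse.

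Since $\bM$ is the spectral projection of $\bP$ onto the eigenvalue~$1$, the spectrum of $\bP-\bM$ is $\{0,\lambda_1,\lambda_2,\ldots\}$ and its spectral radius is exactly $|\lambda_1|$. By Gelfand's formula the Neumann series $\sum_{n\ge 0}z^n(\bP-\bM)^n$ converges for $|z|<1/|\lambda_1|$, with operator-norm bound of order $1/(1-|\lambda_1 z|)$. The dependence of $F_w(z)$ on $w$ enters only through the two basis vectors $\bbe_a,\bbe_b$ indexed by the first and last letters of $w$ and through the scalar $1/\pi_a$; these range over the finite alphabet $\CA$ and are bounded (ergodicity forces $\pi_a>0$ for every $a\in\CA$), so they contribute only a universal constant, and the uniform bound $F_w(z)=O(1/(1-|\lambda_1 z|))$ follows.

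The one minor technical point is the resolvent bound when $\bP-\bM$ has nontrivial Jordan blocks on the circle of radius $|\lambda_1|$: in principle polynomial-in-$(1-|\lambda_1 z|)^{-1}$ corrections could arise, but they can be absorbed into the $O$-constant by choosing an operator norm in which $\|\bP-\bM\|$ is arbitrarily close to $|\lambda_1|$; in the generic diagonalizable case no adjustment is needed, and the bound is as stated.
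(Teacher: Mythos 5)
Your proposal is correct and follows essentially the same route as the paper: both reduce the resolvent of $\bP+\bpi\otimes\bone$ to that of $\bR=\bP-\bpi\otimes\bone$ (your telescoping identity $(\bP-\bM)(\bP+\bM)^n=(\bP-\bM)^{n+1}$ is exactly the paper's $\bM(z)=\bR(\bI-z\bR)^{-1}$, obtained there from the orthogonality $\bR\bD=\bD\bR=0$ of the spectral decomposition), and then invoke the spectral radius $|\lambda_1|$ of that matrix together with finiteness of the alphabet for uniformity in $w$. Your derivation is if anything slightly more careful, since the paper's rank-one spectral sum tacitly assumes diagonalizability, whereas your purely algebraic identities do not and you flag the Jordan-block caveat explicitly.
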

\begin{proof}
By the spectral representation of $\bP$ we know that
$\bP=\bpi\otimes\bone+\sum_{i>0}\lambda_i\bu_i\otimes\bzeta_i$ 
where $\bu_i$ (resp. $\bzeta_i$) are the corresponding right 
(resp. left) eigenvectors. In fact we can introduce the matrices $\bD=\bpi\otimes\bone$ and $\bR=\sum_{i>0}\lambda_i\bu_i\otimes\bzeta_i$ whose spectral radius is $|\lambda_1|$ and satisfies the orthogonal property: $\bR\bD=\bD\bR=0$.

We have Let $\bM(z)=\bP-\bpi\otimes\bone)\left(\bI-z(\bP+\bpi\otimes\bone)\right)^{-1}$ we have $\bM(z)=\bR(\bone-z\bR)^{-1}$. Since $\bR^k=O(|\lambda_1|z)$ $\bR(\bI-z\bR)^{-1}$ is defined for all $z$ such 
that $|z|<\frac{1}{|\lambda_1|}$ and is $O(\frac{1}{1-|\lambda_1 z|})$, and so is $F_w(z)=[\bM(z)]_{a,b}$. 

\end{proof}

The next lemma is important.

\begin{lemma}
For $z$ such that $|\lambda_1 z|<1$ we have for all integers $k$
\be
\sum_{w\in\CA^{k+1}}P(w)F_w(z)=O(\lambda_1^k).
\ee
\label{lemFF}
\end{lemma}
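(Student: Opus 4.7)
The plan is to exploit the fact that $F_w(z)$ depends on $w$ only through its first and last symbols, so the sum over $w\in\CA^{k+1}$ collapses into a trace over $\CA\times\CA$.

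First I would write, using the representation developed in the proof of Lemma~\ref{lemF}, $F_w(z)=\frac{1}{\pi_a}[\bM(z)]_{b,a}$ where $\bM(z)=\bR(\bI-z\bR)^{-1}$, $a=w_1$ and $b=w_{k+1}$. Grouping the words of length $k+1$ by their first symbol $a$ and last symbol $b$, and using the Markov property,
\begin{equation*}
\sum_{w\in\CA^{k+1},\, w_1=a,\, w_{k+1}=b} P(w)=\pi_a\,[\bP^k]_{a,b}.
\end{equation*}
Substituting this in and cancelling $\pi_a$, the sum becomes
\begin{equation*}
\sum_{w\in\CA^{k+1}}P(w)F_w(z)=\sum_{a,b\in\CA}[\bP^k]_{a,b}[\bM(z)]_{b,a}=\trace\bigl(\bP^k\bM(z)\bigr).
\end{equation*}

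The next step is to exploit the orthogonality $\bD\bR=\bR\bD=0$ (with $\bD=\bpi\otimes\bone$, $\bR=\bP-\bD$) together with $\bD^2=\bD$. These relations force all cross-terms in the binomial expansion of $\bP^k=(\bD+\bR)^k$ to vanish, leaving $\bP^k=\bD+\bR^k$ for every $k\ge 1$. Since $\bM(z)$ carries a left factor of $\bR$, we get $\bD\bM(z)=\bD\bR(\bI-z\bR)^{-1}=0$, so
\begin{equation*}
\trace\bigl(\bP^k\bM(z)\bigr)=\trace\bigl(\bR^k\bM(z)\bigr)=\trace\bigl(\bR^{k+1}(\bI-z\bR)^{-1}\bigr).
\end{equation*}

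Finally I would expand $(\bI-z\bR)^{-1}=\sum_{j\ge 0}z^j\bR^j$ and use the fact that $\bR$ has spectral radius $|\lambda_1|$, which gives $\|\bR^{k+1+j}\|=O(|\lambda_1|^{k+1+j})$ in any submultiplicative norm. Summing the geometric series under the hypothesis $|\lambda_1 z|<1$ yields
\begin{equation*}
\trace\bigl(\bR^{k+1}(\bI-z\bR)^{-1}\bigr)=O\!\left(\frac{|\lambda_1|^{k+1}}{1-|\lambda_1 z|}\right)=O(\lambda_1^k),
\end{equation*}
uniformly in $k$ for $z$ fixed with $|\lambda_1 z|<1$, which is the desired bound.

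The only genuinely non-routine step is the reduction of the word sum to the trace: one must keep careful track of the Markov weighting $\pi_a[\bP^k]_{a,b}$ and the fact that the index $(b,a)$ on $\bM(z)$ is transposed relative to the index $(a,b)$ on $\bP^k$, which is precisely what produces a trace rather than some other bilinear form. Once that identification is in place, the spectral-gap argument and the orthogonality of $\bD$ and $\bR$ do the rest essentially automatically.
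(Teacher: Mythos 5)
Your proof is correct and follows essentially the same route as the paper's: reduce the word sum to $\trace(\bP^k\bM(z))$ using the dependence of $F_w(z)$ on only the first and last symbols, then apply $\bP^k=\bD+\bR^k$, the orthogonality $\bD\bM(z)=0$, and the spectral bound $\bR^k=O(|\lambda_1|^k)$. Your write-up is somewhat more explicit about the cancellation of $\pi_a$, the transposed indexing that produces the trace, and the geometric-series bound, but the argument is the same.
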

\begin{proof}
The function $F_w(z)$ depends only on the first and last symbol of $w$. 
Considering a pair of symbols $(a,b)\in\CA^2$ the sum of the probabilities 
of the words of length $k+1$ starting with $a$ and ending with $b$, 
$\sum_{awb\in\CA^{k+1}}P(w)$, equals 
$\pi_a\langle\bbe_b\bP^k\bbe_a\rangle$.
Easy algebra leads to
\begin{eqnarray}
\sum_{w\in\CA^{k+1}}P(w)F_w(z)&=&\sum_{(a,b)\in\CA^2}
\langle\bbe_a\bM(z)\bbe_b\rangle\langle\bbe_b\bP^k\bbe_a\rangle \\
&=&\trace\left(\bM(z)\bP^k\right).
\end{eqnarray}
But since $\bP^k=\bD+\bR^k$ and $\bM(z)\bD=0$ and $\bR^k=O(|\lambda_1|^k)$, we conclude the proof

\end{proof}

We now follow a parallel approach to the approach developped in~\cite{ward} and in~\cite{spa94,js-book}. 

The generating function $\Delta_w(z)=\sum_{n\ge 0}d_n(w)z^n$  becomes
\begin{eqnarray}
\Delta_w(z)=\frac{P(w)z}{1-z}\left(\frac{1+(1-z)
F_{w}(z)}{D_w(z)}\right.
\left.-\frac{1}{1-z+P(w)z}\right).
\end{eqnarray}
We have 
$$d_n(w)=\frac{1}{2i\pi}\oint \Delta_w(z)\frac{dz}{z^{n+1}},
$$
integrated on any loop encircling the origin in the definition domain of 
$d_w(z)$. Extending the result in~\cite{spa94}, the authors of~\cite{ward} 
show that there exists $\rho>1$ such that the function $D_w(z)$ has a 
single root in the disk of radius $\rho$. 
Let $A_w$ be such a root. We have via the residue formula
\be
d_n(w)=\Res(\Delta_w(z),A_w)A_w^{-n}-(1-P(w))^n+d_n(w,\rho),
\ee
where $\Res(f(z),A)$ denotes the residue of function $f(z)$ on 
complex number $A$ and
\be
d_n(w,\rho)=\frac{1}{2i\pi}\oint_{|z|=\rho}\Delta_w(z)\frac{dz}{z^{n+1}}.
\ee

We have 
\be
\Res(\Delta_w(z),A_w)=\frac{P(w)\left(1+(1-A_w)F_{w}(A_w)\right)}{(1-A_w)C_w}
\ee
where $C_w=D'_w(A_w)$. But since $D_w(A_w)=0$ we can write
\be
\Res(\Delta_w(z),A_w)=-\frac{A_w^{-k}S_w(A_w)}{C_w}
\ee

We now consider asymptotic expansion of $A_w$ and $C_w$ as it is 
described in~\cite{js-book}, in Lemma 8.1.8 and Theorem 8.2.2. 
Although the expansions were presented for memoryless case, but for 
Markov source we simply replace $S_w(1)$ by $S_w(1)+P(w)F_w(1)$. 
We find 
\be
\begin{array}{rcl}
A_w&=&1+\frac{P(w)}{S_w(1)}\\
&&+P(w)^2\left(\frac{k-F_w(1)}{S^2_w(1)}-\frac{S_w'(1)}{S^3_w(1)} \right)+
O(P(w)^3)\\
C_w&=&-S_w(1)+P(w)\left(k-F_w(1)-2\frac{S_w'(1)}{S_w(1)}\right)\\
&&+O(P(w)^2)
\end{array}
\ee
Notice that these expansions in the Markov model first appeared in~\cite{ward}.

 From now follow the proof of Theorem  8.2.2 
in~\cite{js-book}. We define the function 
\be
\delta_w(x)=\frac{A_w^{-k}S_w(A_w)}{C_w}A_w^{-x}-(1-P(w))^{x}.
\ee
More precisely we define the function 
$$\td_w(x)=\delta_w(x)-\delta_w(0)e^{-x}$$ 
which has a Mellin transform $\delta_w^*(s)\Gamma(s)=\int_0^\infty\td_w(x)x^{s-1}dx$ 
defined for all $\Re(s)\in(-1,0)$ with
\begin{eqnarray}
\delta_w^*(s)=\frac{A_w^{-k}S_w(A_w}{C_w}\left[(\log A_w)^{-s}-1\right]
+1-\left[-\log(1-P(w))\right]^{-s}.
\end{eqnarray} 
When $w\in\CB_k$ with the expansion of $A_w$ and since 
$S_w(1)=1+O(\delta^k)$ and $S'_w(1)=O(k\delta^k)$, we find that 
similarly as shown in~\cite{js-book}
\be
\delta_w^*(s)=O(|s|k\delta^k)P(w)^{1-s}.
\ee
Therefore, by the reverse Mellin transform, for all $1>\eps>0$:
\begin{eqnarray}
\td(n,w)&=&\frac{1}{2i\pi}\int_{-\eps-i\infty}^{-\eps+i\infty}\delta_w^*(s)
\Gamma(s)n^{-s}ds\nonumber\\
&=&O(n^{1-\eps}P(w)^{1-\eps}k\delta^k)
\end{eqnarray}
When $w\in\CA^k-\CB_k$ we don't have the $S_w(1)=1+O(\delta^k)$. But it is 
shown in~\cite{ward} that there exists $\alpha>0$ such that for all 
$w\in\CA^*$: $S_w(z)>\alpha$ for all $z$ such that $|z|\le\rho$. 
Therefore we get 
$$\td(n,w)=O(n^{1-\eps}P(w)^{1-\eps}).$$  

We set
\be
R_n(w)=d_w(0)e^{-n}+d_n(w,\rho).
\ee
We first investigate the quantity $d_w(0)$. We need to prove that 
$\sum_{w\in\CA^*}d_w(0)$ converges. For this, 
noticing that 
$$S_w(A_w)=S_w(1)+\frac{P(w)}{S_w(1)}S_w'(1)+O(P(w)^2)$$
we obtain
\be
-\frac{A_w^{-k}S_w(A_w)}{C_w}=1-\frac{P(w)}{S_w(1)}
\left(F_w(1)+\frac{S_w'(1)}{S_w(1)}\right)+O(P(w)^2).
\ee
Thus
\be
d_w(0)=-\frac{P(w)}{S_w(1)}\left(F_w(1)+\frac{S_w'(1)}{S_w(1)}\right)+O(P(w))^2).
\ee
Without the term $F_w(1)$ we would have the same expression as 
in~\cite{js-book} whose sum over $w\in\CA^*$ converges. Therefore 
we need to prove that the sum $\sum_{w\in\CA^*}\frac{P(w)}{S_w(1)}F_w(1)$ 
converges. It is clear that the sum 
$$\sum_k\sum_{w\in\CA^k-\CB_k}\frac{P(w)}{S_w(1)}F_w(1)
$$ 
converges since 
$$\sum_{w\in\CA^k-\CB_k}P(w)=O(\delta^k)$$ 
and 
$F_w(1)$ is uniformly bounded. Now we consider the other part 
$$\sum_k\sum_{w\in\CB_k}\frac{P(w)}{S_w(1)}F_w(1).$$ 
We know that $S_w(1)=1+O(\delta^k)$, therefore
\be
\sum_{w\in\CB_k}\frac{P(w)}{S_w(1)}F_w(1)=\sum_{w\in\CB^k}P(w)F_w(1)+O(\delta^k).
\ee
But 
$$\sum_{w\in\CB^k}P(w)F_w(1)=\sum_{w\in\CA^k}P(w)F_w(1)+O(\delta^k),$$ 
and we know by Lemma~\ref{lemFF} that 
$\sum_{w\in\CA^k}P(w)F_w(1)=O(\lambda_1^k)$. 
Thus the sum $\sum_k\sum_{w\in\CA^k}\frac{P(w)}{S_w(1)}F_w(1)$ converges. 

The second and last effort concentrates on the term $d_n(w,\rho)$. 
We proceed as in the proof of Theorem 8.2.2 in~\cite{js-book}. 
We first have $d_n(w,\rho)=O(P(w)\rho^{-n})$ 
which is $O(n^\eps P(w)^\eps)$ without any condition on $w$. 
The issue is now to work on $w\in\CB_k$. In this case we have 
$S_w(z)=1+O(\delta^k)$ and therefore 
\begin{eqnarray}
d_n(w,\rho)&=&\frac{1}{2i\pi}\oint \frac{P(w)}{1-z}\left(\frac{1}{D_w(z)}
-\frac{1}{1-z+zP(w)}\right)\frac{dz}{z^{n+1}}\nonumber\\
&&+\frac{1}{2i\pi}\oint P(w)\frac{F_w(z)}{D_w(z)}\frac{dz}{z^{n+1}}.
\end{eqnarray}
We notice that the function 
$$ 
\frac{P(w)}{1-z}\left(\frac{1}{D_w(z)}-\frac{1}{1-z+zP(w)}\right)
$$ 
is $O(P(w)\delta^k)+O(P(w)^2)$, therefore the first integral is 
$O(P(w)\delta^k\rho^{-n})$. The second function 
$P(w)\frac{F_w(z)}{D_w(z)}$ is equal to $P(w)F_w(z)+O(P(w)\delta^k)$. 
We already know that $\sum_{w\in\CB_k}P(w)F_w(z)=O(\lambda_1^k)$, 
thus the series converges and  the lemma is proven. 

\section{Sketch of the Proof of Theorem~\ref{th-trie}}
\label{sec-trie} 

Let $a\in\CA$. We denote by $t_{a,n}$ the average size of a trie over 
$n$ independent Markovian sequences, all starting with the same symbol $a$. 
Then for $n\ge 2$
\be
\label{e-tn}
t_n=1+\sum_{a\in\CA}\sum_{k=0}^n {n\choose k}\pi_a^k(1-\pi_a)^{n-k}t_{a,k},
\ee
and similarly  for $b\in \CA$
\be
\label{e-tnb}
t_{n,b}=1+\sum_{a\in\CA}\sum_{k=0}^n {n\choose k}P(a|b)^k(1-P(a|b))^{n-k}t_{a,k},
\ee
where we recall $P(a|b)$ is the element of matrix $\bP$. 
Let $T(z)=\sum_nt_n\frac{z^n}{n!}e^{-z}$ and 
$T_a(z)=\sum_nt_{a,n}\frac{z^n}{n!}e^{-z}$ be the familiar Poisson transforms.
Using (\ref{e-tn}) and (\ref{e-tnb}) we find
\begin{eqnarray}
T(z)&=&1-(1+z)e^{-z}+\sum_{a\in\CA}T_a(\pi_a z),\\
T_b(z)&=&1-(1+z)e^{-z}+\sum_{a\in\CA}T_a(P(a|b) z).\
\end{eqnarray} 
Using dePoissonization arguments (see \cite{depo}) we shall
obtain $t_n=T(n)+O(\frac{1}{n}T(n))$.  Thus we need to study
$T(z)$ for large $z$ in a cone around the real axis. For this we apply
the Mellin transform that we describe next. In fact the convergence between the quantities $t_n$ and $T_n$ could also be derived by the application of the Rice method on the Mellin transform, since the later as an explicit form.

Let now $\bT(z)$ be the vector consisting of $T_a(z)$ for every $a\in\CA$.
It is not hard to see that its Mellin transform 
$$\bT^*(s)=\int_0^\infty\bT(z)z^{s-1}dz$$ 
is defined for $-1>\Re(s)>-2$ 
(since $\bT(z)=O(z^2)$ when $z\to 0$), and
\be
\bT^*(s)=-(1+s)\Gamma(s)\bone+\bP(s)\bT^*(s)
\ee
where $\bP(s)$ is the matrix consisting of $P(a|b)^{-s}$ if $P(a|b)>0$ 
and 0 otherwise. This identity leads to 
$$\bT^*(s)=-(1+s)\Gamma(s)(\bI-\bP(s))^{-1}\bone$$ 
where $\bI$ is the identity matrix. 
Similarly the Mellin transform $T^*(s)$ of $T(z)$ satisfies
\be
T^*(s)=-(1+s)\Gamma(s)+\langle\bpi(s),\bT^*(s)\rangle.
\ee
where $\bpi(s)$ is the vector composed of $\pi_a^{-s}$.

The inverse Mellin  transform of $T^*(s)$ is defined as
\be
T(n)=\frac{1}{2i\pi}\int_{c-i\infty}^{c+i\infty}T^*(s)n^{-s}ds, \ \ \ \ -1>c>-2.
\label{RMellin}
\ee
In order to find asymptotic behavior of $T(z)$ as $z\to \infty$ we need to
study the poles of $T^*(s)$ for $-2<\Re(s)$.
As discussed in \cite{jst,js-book} this is equivalent to analyzing the poles 
of $\bT^*(s)$. Since $(1+s)\Gamma(s)$ has no pole on $-2<\Re(s)<0$ we 
must consider poles of $(\bI-\bP(s))^{-1}$. In other words
(see \cite{jst,js-book}) we need to find $s$ for which the eigenvalue of largest modulus
$\lambda(s)$ of $\bP(s)$ is equal to 1. It is easy to see that $\lambda(-1)=1$ 
since $\bP(-1)=\bP$. The residue at $s=-1$ of 
$n^{-s}(\bI-\bP(s))^{-1}\bone$ is equal to $\frac{n}{h}\bone$ 
where $h$ is the entropy rate of the Markovian source. 

As explained in \cite{jst} in the periodic case there are multiple values of 
$s$ such that $\lambda(s)=1$ and $\Re(s)=-1$. Since these poles 
are regularly spaced 
on the axis $\Re(s)=0$, they contribute to the oscillating terms 
(function $Q$ in Theorem~\ref{th-trie}) in the asymptotic expansion of 
$t_n$. Furthermore, the location of zeros of $\lambda(s)=1$ in the
 periodic case tells us that there exists 
$\eps$ such that  $(\bI-\bP(s))$ has no pole for $-1<\Re(s)<-1+\eps$ 
leading to the error term $O(n^{1-\eps})$.

In the aperiodic case there is only one pole on the line $\Re(s)=-1$, 
thus the oscillating term disappears. 
However, zeros of $\lambda(s)=1$ can lie arbitrarily close to
the line $\Re(s)=1$, therefore the error term is just $o(n)$.


\begin{thebibliography}{99}

\bibitem{chauvin}
P. C\'enac, B. Chauvin, F. Paccaut, and N. Pouyanne,
Uncommon suffix tries,
{\it Random Structures \& Algorithms}, vol. 46, 117-141, 2015

\bibitem{cfv99}
J. Clement, P. Flajolet, and B. Vall\'{e}e,
Dynamic Sources in Information Theory:
A General Analysis of Trie Structures,
{\it Algorithmica}, 29, 307-369, 2001.

\bibitem{ward} Fayolle, J.,  Ward, M. D. 
Analysis of the average depth in a suffix tree under a Markov model. 
In {\it International Conference on Analysis of Algorithms DMTCS}, proc. 
AD (Vol. 95, p. 104), 2005.

\bibitem{gw}
J. Geithner and M. Ward,
Variance of the 
Profile in Suffix Trees,
submitted to AofA 2016.

\bibitem{spa94} 
Jacquet, P.,  Szpankowski, W. 
Autocorrelation on words and its applications: analysis of suffix trees by 
string-ruler approach. 
{\it J. Combinatorial Theory}, Series A, 66(2), 237-269, 1994.

\bibitem{jst}
P. Jacquet, W. Szpankowski, and J. Tang,
Average Profile of the Lempel-Ziv Parsing Scheme for a Markovian Source,
{\it Algorithmica}, 2002.

\bibitem{frv} Flajolet, P., Roux, M., \& Vallée, B., Digital trees and memoryless sources: from arithmetics to analysis. DMTCS Proceedings, (01), 2010.

\bibitem{depo} Jacquet, P., Szpankowski, W. (1998). Analytical depoissonization and its applications. Theoretical Computer Science, 201(1), 1-62.

\bibitem{js-book}
Jacquet, P.,  Szpankowski, W. 
{\it Analytic Pattern Matching: From DNA to Twitter}. 
Cambridge University Press, 2015.

\bibitem{ms13}
N.Merhav and W. Szpankowski,
Average Redundancy of the Shannon Code for Markov Sources,
{\it IEEE Trans. Information Theory}, 59, 7186-7193, 2013.

\bibitem{rj}
M. Regnier, and P. Jacquet,
New Results on the Size of Tries,
{\it IEEE  Trans. Information Theory}, 35, 203--205, 1989.

\bibitem{js89} 
Jacquet, P., Szpankowski, W. (1991). Analysis of digital tries with Markovian dependency.  IEEE Transactions on Information Theory, 37(5), 1470-1475.

\bibitem{rs98}
M. R\'{e}gnier, and W. Szpankowski,
On Pattern Frequency Occurrences in a
Markovian Sequence, {\it Algorithmica}, 22, 631--649, 1998.



\bibitem{shields96}
P. Shields,
{\it The Ergodic Theory of Discrete Sample Paths},
American Mathematical Society, Providence, 1996.


\bibitem{spa93}
W. Szpankowski,
A Generalized Suffix Tree and Its (Un)Expected Asymptotic Behaviors,
{\it SIAM J. Computing}, 22, 1176--1198,  1993.




\end{thebibliography}
\end{document}